\definecolor{shadecolor}{rgb}{0.9,0.9,0.9}
\theoremstyle{plain}
\newtheorem{proposition}{Proposition}
\newtheorem{lemma}[proposition]{Lemma}
\newtheorem{theorem}[proposition]{Theorem}
\newtheorem{definition}{Definition}
\def\squareforqed{\hbox{\rlap{$\sqcap$}$\sqcup$}}
\def\qed{\ifmmode\squareforqed\else{\unskip\nobreak\hfil
\penalty50\hskip1em\null\nobreak\hfil\squareforqed
\parfillskip=0pt\finalhyphendemerits=0\endgraf}\fi}
\def\endenv{\ifmmode\;\else{\unskip\nobreak\hfil
\penalty50\hskip1em\null\nobreak\hfil\;
\parfillskip=0pt\finalhyphendemerits=0\endgraf}\fi}
\newenvironment{proof}{\noindent \textbf{{Proof~} }}{\hfill $\blacksquare$}
\newcounter{remark}
\newcounter{example}
\mathchardef\ordinarycolon\mathcode`\:
\def\vcentcolon{\mathrel{\mathop\ordinarycolon}}
\newmdenv[skipabove=7pt,
skipbelow=7pt,
backgroundcolor=darkblue!15,
innerleftmargin=5pt,
innerrightmargin=5pt,
innertopmargin=5pt,
leftmargin=0cm,
rummaging=0cm,
innerbottommargin=5pt,
linewidth=1pt]{tBox}
\newmdenv[skipabove=7pt,
skipbelow=7pt,
backgroundcolor=red!15,
innerleftmargin=5pt,
innerrightmargin=5pt,
innertopmargin=5pt,
leftmargin=0cm,
rightmargin=0cm,
innerbottommargin=5pt,
linewidth=1pt]{rBox}
\newmdenv[skipabove=7pt,
skipbelow=7pt,
backgroundcolor=blue2!25,
innerleftmargin=5pt,
innerrightmargin=5pt,
innertopmargin=5pt,
leftmargin=0cm,
rightmargin=0cm,
innerbottommargin=5pt,
linewidth=1pt]{dBox}
\newmdenv[skipabove=7pt,
skipbelow=7pt,
backgroundcolor=darkkblue!15,
innerleftmargin=5pt,
innerrightmargin=5pt,
innertopmargin=5pt,
leftmargin=0cm,
rightmargin=0cm,
innerbottommargin=5pt,
linewidth=1pt]{sBox}
\definecolor{darkblue}{RGB}{0,76,156}
\definecolor{darkkblue}{RGB}{0,0,153}
\definecolor{blue2}{RGB}{102,178,255}
\definecolor{darkred}{RGB}{195,0,0}
\newcommand{\nc}{\newcommand}
\nc{\rnc}{\renewcommand}
\nc{\lbar}[1]{\overline{#1}}
\nc{\bra}[1]{\langle#1|}
\nc{\ket}[1]{|#1\rangle}
\nc{\ketbra}[2]{|#1\rangle\!\langle#2|}
\nc{\braket}[2]{\langle#1|#2\rangle}
\newcommand{\braandket}[3]{\langle #1|#2|#3\rangle}
\DeclarePairedDelimiter{\norm}{\lVert}{\rVert}
\DeclarePairedDelimiter{\abs}{\lvert}{\rvert}
\nc{\proj}[1]{| #1\rangle\!\langle #1 |}
\nc{\avg}[1]{\langle#1\rangle}
\nc{\rank}{\operatorname{Rank}}
\nc{\smfrac}[2]{\mbox{$\frac{#1}{#2}$}}
\nc{\tr}{\operatorname{Tr}}
\nc{\ox}{\otimes}
\nc{\dg}{\dagger}
\nc{\dn}{\downarrow}
\nc{\cA}{{\cal A}}
\nc{\cB}{{\cal B}}
\nc{\cC}{{\cal C}}
\nc{\cD}{{\cal D}}
\nc{\cE}{{\cal E}}
\nc{\cF}{{\cal F}}
\nc{\cG}{{\cal G}}
\nc{\cH}{{\cal H}}
\nc{\cI}{{\cal I}}
\nc{\cJ}{{\cal J}}
\nc{\cK}{{\cal K}}
\nc{\cL}{{\cal L}}
\nc{\cM}{{\cal M}}
\nc{\cN}{{\cal N}}
\nc{\cO}{{\cal O}}
\nc{\cP}{{\cal P}}
\nc{\cQ}{{\cal Q}}
\nc{\cR}{{\cal R}}
\nc{\cS}{{\cal S}}
\nc{\cT}{{\cal T}}
\nc{\cU}{{\cal U}}
\nc{\cV}{{\cal V}}
\nc{\cX}{{\cal X}}
\nc{\cY}{{\cal Y}}
\nc{\cZ}{{\cal Z}}
\nc{\cW}{{\cal W}}
\nc{\ip}{{\mathit{p}}}
\nc{\iq}{{\mathit{q}}}
\nc{\csupp}{{\operatorname{csupp}}}
\nc{\qsupp}{{\operatorname{qsupp}}}
\nc{\var}{{\operatorname{var}}}
\nc{\rar}{\rightarrow}
\nc{\lrar}{\longrightarrow}
\nc{\polylog}{{\operatorname{polylog}}}
\nc{\wt}{{\operatorname{wt}}}
\nc{\av}[1]{{\left\langle {#1} \right\rangle}}
\nc{\supp}{{\operatorname{supp}}}
\nc{\argmin}{{\operatorname{argmin}}}
\def\x{\xi}
\nc{\RR}{{{\mathbb R}}}
\nc{\CC}{{{\mathbb C}}}
\nc{\FF}{{{\mathbb F}}}
\nc{\NN}{{{\mathbb N}}}
\nc{\ZZ}{{{\mathbb Z}}}
\nc{\PP}{{{\mathbb P}}}
\nc{\QQ}{{{\mathbb Q}}}
\nc{\UU}{{{\mathbb U}}}
\nc{\EE}{{{\mathbb E}}}
\nc{\id}{{\operatorname{id}}}
\nc{\CHSH}{{\operatorname{CHSH}}}
\nc{\be}{\begin{equation}}
\nc{\ee}{{\end{equation}}}
\nc{\bea}{\begin{eqnarray}}
\nc{\eea}{\end{eqnarray}}
\nc{\rU}{\mbox{U}}
\nc{\ob}[1]{#1}
\nc{\SEP}{{\text{\rm SEP}}}
\nc{\NS}{{\text{\rm NS}}}
\nc{\LOCC}{{\text{\rm LOCC}}}
\nc{\PPT}{{\text{\rm PPT}}}
\nc{\EXT}{{\text{\rm EXT}}}
\nc{\Sym}{{\operatorname{Sym}}}
\nc{\ERLO}{{E_{\text{r,LO}}}}
\nc{\ERLOCC}{{E_{\text{r,LOCC}}}}
\nc{\ERPPT}{{E_{\text{r,PPT}}}}
\nc{\ERLOCCinfty}{{E^{\infty}_{\text{r,LOCC}}}}
\nc{\Aram}{{\operatorname{\sf A}}}
\newcommand{\eps}{\varepsilon}
\newcommand{\CPTP}{\text{\rm CPTP}}
\def\grd@save@target#1{%
  \def\grd@target{#1}}
\def\grd@save@start#1{%
  \def\grd@start{#1}}
\tikzset{
  grid with coordinates/.style={
    to path={%
      \pgfextra{%
        \edef\grd@@target{(\tikztotarget)}%
        \tikz@scan@one@point\grd@save@target\grd@@target\relax
        \edef\grd@@start{(\tikztostart)}%
        \tikz@scan@one@point\grd@save@start\grd@@start\relax
        \draw[minor help lines,magenta] (\tikztostart) grid (\tikztotarget);
        \draw[major help lines] (\tikztostart) grid (\tikztotarget);
        \grd@start
        \pgfmathsetmacro{\grd@xa}{\the\pgf@x/1cm}
        \pgfmathsetmacro{\grd@ya}{\the\pgf@y/1cm}
        \grd@target
        \pgfmathsetmacro{\grd@xb}{\the\pgf@x/1cm}
        \pgfmathsetmacro{\grd@yb}{\the\pgf@y/1cm}
        \pgfmathsetmacro{\grd@xc}{\grd@xa + \pgfkeysvalueof{/tikz/grid with coordinates/major step}}
        \pgfmathsetmacro{\grd@yc}{\grd@ya + \pgfkeysvalueof{/tikz/grid with coordinates/major step}}
        \foreach \x in {\grd@xa,\grd@xc,...,\grd@xb}
        \node[anchor=north] at (\x,\grd@ya) {\pgfmathprintnumber{\x}};
        \foreach \y in {\grd@ya,\grd@yc,...,\grd@yb}
        \node[anchor=east] at (\grd@xa,\y) {\pgfmathprintnumber{\y}};
      }
    }
  },
  minor help lines/.style={
    help lines,
    step=\pgfkeysvalueof{/tikz/grid with coordinates/minor step}
  },
  major help lines/.style={
    help lines,
    line width=\pgfkeysvalueof{/tikz/grid with coordinates/major line width},
    step=\pgfkeysvalueof{/tikz/grid with coordinates/major step}
  },
  grid with coordinates/.cd,
  minor step/.initial=.2,
  major step/.initial=1,
  major line width/.initial=2pt,
}
\def\problem@s{}
\newcounter{problems@cnt}
\newcommand{\allproblems}{\problem@s}
\definecolor{colortwo}{rgb}{0.4,0.77,0.17}
\definecolor{colorthree}{rgb}{0.01,0.51,0.93}
\pgfplotsset{compat=1.18}
\newcommand{\set}[1]{ \left\{ #1 \right\} }
\newcommand{\op}{\operatorname}
\nc{\EPPT}{E_{\op{PPT}}}
\nc{\EPPTone}{E_{\op{PPT}}^{(1)}}
\nc{\EK}{E_{\kappa}}
\nc{\EN}{E_N}
\nc{\exact}{\op{exact}}
\nc{\PWPq}{\op{PWPq}}
\newcommand{\trace}[2][]{\tr_{#1} \left[ #2 \right]}
\newcommand{\inner}[2]{\left\langle #1, #2 \right\rangle}
\nc{\cmark}{\ding{51}}
\nc{\xmark}{\ding{55}}
\nc\bmu{{ \mathbf{u} }}
\nc\bmv{{ \mathbf{v} }}
\nc\bmp{{ \mathbf{p} }}
\nc\bmq{{ \mathbf{q} }}
\nc\bmone{{ \mathbf{1} }}
\begin{document}
\title{Physical Implementability for\\ Reversible Magic State Manipulation}

\author[1]{Yu-Ao Chen\thanks{yuaochen@hkust-gz.edu.cn}}
\author[2,3]{Gilad Gour\thanks{giladgour@technion.ac.il}}
\author[1]{Xin Wang\thanks{felixxinwang@hkust-gz.edu.cn}}
\author[1]{Lei Zhang\thanks{lzhang897@connect.hkust-gz.edu.cn}}
\author[1]{Chenghong Zhu\thanks{czhu854@connect.hkust-gz.edu.cn}}
\affil[1]{\small Thrust of Artificial Intelligence, Information Hub,\par The Hong Kong University of Science and Technology (Guangzhou), Guangdong 511453, China}
\affil[2]{\small Technion - Israel Institute of Technology, Haifa, 3200003, Israel}
\affil[3]{\small Department of Mathematics and Statistics, Institute for Quantum Science and Technology, University of Calgary, Calgary, Alberta, Canada}
\date{\today}
\maketitle

\begin{abstract}
Magic states are essential for achieving universal quantum computation. This study introduces a reversible framework for the manipulation of magic states in odd dimensions, delineating a necessary and sufficient condition for the exact transformations between magic states under maps that preserve the trace of states and positivity of discrete Wigner representation. Utilizing the stochastic formalism, we demonstrate that magic mana emerges as the unique measure for such reversible magic state transformations. We propose the concept of physical implementability for characterizing the hardness and cost of maintaining reversibility. Our findings show that, analogous to the entanglement theory, going beyond the positivity constraint enables an exact reversible theory of magic manipulation, thereby hinting at a potential incongruity between the reversibility of quantum resources and the fundamental principles of quantum mechanics. Physical implementability for reversible manipulation provides a new perspective for understanding and quantifying quantum resources, contributing to an operational framework for understanding the cost of reversible quantum resource manipulation.
\end{abstract}


\newpage
\section{Introduction}

Within the realm of quantum information science, the extent to which we can manipulate various quantum resources directly influences their practical utility. 
This challenge is often framed as the resource convertibility problem, focusing on the establishment of unique measures for transforming one resource state into another under operational restrictions.
Drawing on the principles of thermodynamics~\cite{Carnot1979}, the exploration of quantum resource reversibility has become pivotal for enhancing the manipulation of quantum resources. This subject of study encompasses a range of pioneering efforts, from initial work that defines how transformations are characterized by local operations and classical communication in the realm of pure-state entanglement~\cite{Vidal2001}, to investigations into the handling of quantum coherence under strictly incoherent operations~\cite{Winter2016}. Recent efforts also expanded these conditions across diverse contexts~\cite{gour2018quantum, Takagi_2019, regula2020benchmarking, liu2019one, wang2023reversible, regula2024reversibility}.

Due to the gap in the generalized quantum Stein's lemma~\cite{Berta2023}, the reversibility of quantum entanglement and general quantum resources under asymptotically resource non-generating operations remains unclear. 
Although the solution to the above open problem is still missing, recent advancements have shown that relaxing the trace-preserving~\cite{regula2024reversibility} or completely positive~\cite{wang2023reversible}  conditions of quantum operations can restore the reversible transformations of entanglement. Specifically, Regula and Lami~\cite{regula2024reversibility}  demonstrate that it is possible to reversibly interconvert all states in general quantum resource theories, provided that probabilistic protocols are allowed. Meanwhile, Wang et al.~\cite{wang2023reversible} show that entanglement non-generating maps enable reversible exact transformations of entangled states, leading to an exact characterization of reversible entanglement transformations governed by logarithmic negativity.
Recent works~\cite{Yuan2024,Zhao2024} have also considered generalized ways beyond quantum channels to virtually transform quantum resources, with a primary focus on shadow information or the measurement statistics of quantum resources. Within the framework of quantum resource theory, these theoretical developments facilitate the emergence of novel insights into the problem of identifying which operational properties must be compromised to attain reversibility. Furthermore, such exploration could provide new angles for quantifying and manipulating quantum resources.

The advancement of quantum information processing underscores the importance of manipulating quantum non-stabilizer resources, colloquially termed quantum magic resources, which is pivotal for fault-tolerant quantum computation (FTQC). The significance of quantum magic resources arises from the constraints of stabilizer circuits, where stabilizer circuits composed of Clifford gates subject to efficient classical simulation according to the Gottesman-Knill Theorem~\cite{gottesman1997stabilizer}. To achieve universal quantum computation and unlock potential quantum advantage~\cite{Shor_1997, knill2005quantum}, it is necessary to incorporate and manipulate magic states. 
The resource theory of magic states~\cite{Veitch2014,Howard2016,Wang2018,WWS19,Ahmadi2017,Seddon_2019,Bravyi2018,Beverland2019,Jiang2021,Leone2022} plays a crucial role in quantum computing with applications in FTQC, classical simulation of quantum circuits, and gate synthesis, yet the reversibility and asymptotic transformation rates of these states remain less explored. Understanding the extent to which magic state manipulations can be reversed and the associated costs is of fundamental importance. Investigating the reversibility of magic state manipulation will not only deepen our comprehension of these essential quantum resources but also offer novel perspectives on quantifying the magic resources in quantum systems. Moreover, studying the magic state transformation rates will shed light on the scalability and efficiency of quantum computing protocols that rely heavily on these states. 

In this paper, we introduce a reversible theory of exact magic manipulation, drawing a parallel between magic state manipulation and thermodynamics. We first show that magic can be manipulated exactly and reversibly if the allowable transformations preserve the positivity of the Wigner function in odd dimensions, referred to as PWP quasi-operations ($\PWPq$ operations). The key result is that magic mana precisely characterizes transformations of odd dimension magic states under $\PWPq$ operations (cf. Theorem~\ref{thm:reversible}), i.e.,
\begin{equation}
    \rho \xrightarrow{\PWPq} \sigma \Longleftrightarrow \cM(\rho) \geq \cM(\sigma)
.\end{equation}
This equation is particularly constructed via a new representation of magic states called the stochastic Wigner representation. We further propose the concept of the physical implementability of $\PWPq$ operations. This quantity can offer a more detailed characterization of the hardness associated with maintaining reversibility between magic states even with identical magic mana. These results together constitute the first class of transformations that guarantee the reversibility of magic state manipulation with quantitative analysis of the physical costs.

This paper is structured as follows. In Section~\ref{sec:preliminary}, we define the notation and overview the resource theory and the task of magic state manipulation. In Section~\ref{sec:sto_form}, we introduce the stochastic formalism of the Wigner representation. In Section~\ref{sec:reversibility}, we introduce a new set of operations, namely positivity Wigner preserving operations based on the stochastic formalism of Wigner representation. Then we drive a necessary and sufficient condition that fully characterize the odd dimension magic state manipulation.  Finally, we define the physical implementatbility to characterize the hardness of maintaining such reversibility in Section~\ref{sec:physical_imple}.

\section{Preliminary}\label{sec:preliminary}

\paragraph{Notations.} Throughout the paper, we study the Hilbert space $\cH_d$ with odd dimension $d$. Let $\cL(\cH_d)$ be the space of linear operators mapping $\cH_d$ to itself and $\cD(\cH_d)$ be the set of density operators acting on $\cH_d$. It is worth noting that qudit-based quantum computing is gaining increasing significance, as numerous problems in the field are awaiting further exploration~\cite{Wang_2020}. The case of non-prime dimension can be understood as a tensor product of Hilbert spaces each having prime dimension.

\paragraph{Properties of linear maps.} Let $\cN$ be a linear map. $\cN$ is Hermitian-preserving (HP) if it maps any Hermitian operator to another Hermitian operator; $\cN$ is positive if it maps any positive semi-definite operator to another positive semi-definite operator, and is called completely positive (CP) if this positivity is preserved on any extended reference system. $\cN$ is also trace-preserving (TP) if it preserves the trace of states.  We denote $\cJ_{\cN}$ as the Choi representation of the linear map.

\paragraph{Wigner representation.} To characterize the stabilizerness of quantum states and operations, we first recall the definition of the discrete Wigner function~\cite{WOOTTERS19871,Gross_2006a,Gross_2006b}. Let $\{\ket{j} \}_{j=0}^{d - 1}$ be the standard computational basis of $\cH_d$. The unitary boost and shift operators $X, Z \in \CC^{d \times d}$ is defined by $X\ket{j} = \ket{j + 1 \mod 2}, Z\ket{j} = w^j \ket{j}$, where $w = e^{2\pi i/d}$. The discrete phase space of a single $d$-level system is $\mathbb{Z}_d \times \mathbb{Z}_d$. At each point $\bmu=(u_1,u_2) \in \mathbb{Z}_d \times \mathbb{Z}_d$,
the discrete Wigner function of a state $\rho$ is defined as $\cW_\rho(\bmu) = \trace{A_\bmu \rho} / d$, where $A_\bmu$ is the phase-space point operator given by 
\begin{equation}
    A_\mathbf{0} = \frac{1}{d} \sum_\bmu T_\bmu, A_\bmu = T_\bmu A_0 T_\bmu^{\dagger},\, \textrm{ where }\, T_\bmu=\tau^{-u_1 u_2} Z^{u_1} X^{u_2}, \tau=e^{(d+1) \pi i / d}
.\end{equation}
Note that the construction of $A_\bmu$ is dependent on the dimension of the underlying Hilbert space. In the context of this paper, when dealing with two different Hilbert spaces labeled as $A$ and $B$ respectively, $A_\bmu$ will be denoted as $A_A^\bmu$ and $A_B^\bmu$ for each respective space.
We say a state $\rho$ has positive discrete Wigner functions (PWFs) if $\cW_{\rho}(\bmu) \geq 0, \forall\, \bmu \in \mathbb{Z}_d \times \mathbb{Z}_d$. It is known that states with non-negative Wigner functions are classically simulable and thus are useless in magic state distillation~\cite{Veitch2012}. This is analog to that of states with positive partial transpose (PPT) in entanglement distillation~\cite{Horodecki_1998, Peres1996}. The discrete Wigner function of a quantum channel is given by,
\begin{equation}
    \cW_{\cN}\left(\bmv \mid \bmu\right) = \frac{1}{d_B}\trace{ A_B^{\bmv} \cdot \cN\left(A_A^\bmu\right) }
.\end{equation} 

\paragraph{Stabilizer states and channels.}
We say the transformation is completely positive, trace-preserving and positive-Wigner-preserving (CPTP-PWP) if it is a quantum operation that completely preserves the positivity of the Wigner function of states. And it is known that transformations that preserve the positivity of Wigner function is classical simulable~\cite{WWS19}.
More details on Wigner representation can be referred to Appendix~\ref{appendix:Wigner function}. 
Some CPTP-PWP operations can preserve the set of stabilizer pure states, which are the quantum states obtained from applying Clifford operators to the zero state. Such operations are said to be completely stabilizer-preserving (CSP)~\cite{Seddon_2019}. The CSP operations further contain a special group of classical simulable operations, known as the stabilizer operations (SO), and the set of which is comprised of Clifford operations, tensoring in stabilizer states, partial trace, measurements in the computational basis, and post-processing conditioned on these measurement results.

\paragraph{Mana of quantum states.}
Within the Wigner representation for odd dimension, it is well-known that all positively represented states used in Clifford circuits admit an eﬃcient classical simulation~\cite{Mari2012}, so negativity is a necessary resource for universal fault-tolerant quantum computing~\cite{Veitch2012}.
We can quantify the amount of magic in a quantum state $\rho$ via magic mana~\cite{Veitch2014},
\begin{equation}
    \cM(\rho) = \log \sum_\bmu \abs{\cW_{\rho}(\bmu)}
.\end{equation}
Note that mana satisfies several properties such as additivity and monotonicity.

\paragraph{Magic state manipulation.}
Magic-state transformation is a crucial component in the realization of scalable, fault-tolerant, and universal quantum computation~\cite{Bravyi2005}. To gain a deeper understanding of the fundamental limits of transforming between magic states, we will investigate the exact magic transformation task. Let $\Omega$ denote a set of free operations or allowed transformations. The task of exact magic transformation describes the process of converting one state into another exactly as the number of copies approaches infinity under certain free operations. The asymptotic conversion ratio of exact magic transformation from $\rho$ to $\sigma$, under the set of free operations $\Omega$, is defined as:
\begin{equation}
    R_{\Omega} \left( \rho \to \sigma \right) = \sup \left\{r: \exists n_0: \forall n \geq n_0, \exists \Lambda_n \in  \Omega: \Lambda_n\left(\rho^{\ox n}\right)=\sigma^{\ox rn}\right\}
.\end{equation}
We say that two resource states can be exactly interconverted reversibly if
\begin{equation}
    R_{\Omega}(\rho \to \sigma) \times R_{\Omega}(\sigma \to \rho) = 1.
\end{equation}

\section{Stochastic Formalism of Magic State Transformation}\label{sec:sto_form}

In this section, we introduce the stochastic formalism of the Wigner representation and their corresponding properties. The idea of the stochastic formalism of Wigner representation has been employed to investigate the resource theory of quantum magic channels~\cite{WWS19, koukoulekidis2022constraints}. Here, we delve deeper into additional properties of the formalism.

\begin{definition}
    The \emph{stochastic Wigner representation} of an odd dimensional operator $X \in \CC^{d \times d}$ is defined as $W_X \coloneqq \frac{1}{d} \sum_{\bmu} \trace{A_{\bmu} X} \ket{u_1, u_2}$ for $\bmu = (u_1, u_2) \in \ZZ_d \times \ZZ_d$. 
    Correspondingly, the \emph{stochastic Wigner representation} of a linear map $\cN_{A \to B}$ is defined as
\begin{equation} \label{eqn:sto Wigner function}
    W_\cN \coloneqq \frac{1}{d_B} \sum_{\bmu, \bmv} \trace{ \left((A_A^\bmu)^T \ox A_B^{\bmv}\right) \cJ_\cN } \ketbra{v_1, v_2}{u_1, u_2}
.\end{equation}
\end{definition}   
The following lemma characterize the properties of the stochastic Wigner representation.
\begin{lemma} \label{lem:sto Wigner properties}
    Let $\cN_{A \to B}$ be a linear map, and $X$ be a matrix.
\begin{enumerate}
    \item $W_\cN W_X = W_{\cN(X)}$.
    \item $W_X$ is real if and only if $X$ is Hermitian.
    \item $W_\cN$ is real if and only if $\cN$ is Hermitian-preserving.
    \item $X$ is positive semidefinite if and only if $\inner{W_X}{W_{\ketbra{\psi}{\psi}}} \geq 0$ for every state $\ket{\psi}$.
\end{enumerate}
\end{lemma}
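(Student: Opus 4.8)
The plan is to establish the four items in the order (2), (1), (3), (4), leaning throughout on two standard facts about the phase-space point operators that I would record at the outset: each $A_\bmu$ is Hermitian, $A_\bmu^\dagger = A_\bmu$, and the family is orthogonal in the Hilbert--Schmidt inner product, $\trace{A_\bmu A_\bmv} = d\,\delta_{\bmu\bmv}$. The latter gives both the completeness relation $\sum_\bmu \trace{A_\bmu M}\, A_\bmu = d\,M$ and the fact that $\{A_\bmu\}$ spans $\cL(\cH_d)$, which I use repeatedly. Unpacking the definition, the $\bmu$-th entry of $W_X$ is the Wigner value $\tfrac1d\trace{A_\bmu X}$, so $W_X$ is just the coordinate vector of $X$ in the (rescaled) $A_\bmu$-basis, and $X \mapsto W_X$ is a linear bijection $\cL(\cH_d) \to \CC^{d^2}$.

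For (2) I would compute $\overline{\trace{A_\bmu X}} = \trace{(A_\bmu X)^\dagger} = \trace{X^\dagger A_\bmu} = \trace{A_\bmu X^\dagger}$, using $A_\bmu^\dagger = A_\bmu$; hence $\overline{W_X} = W_{X^\dagger}$ entrywise. Thus $W_X$ is real iff $W_X = W_{X^\dagger}$, and by injectivity of $X \mapsto W_X$ this is equivalent to $X = X^\dagger$. For (1) the engine is the Choi action formula $\cN(X) = \trace[A]{(X^T \ox I_B)\,\cJ_\cN}$. I would expand $X = \tfrac{1}{d_A}\sum_\bmu \trace{A_A^\bmu X}\, A_A^\bmu$ via completeness, substitute into $\tfrac{1}{d_B}\trace{A_B^\bmv\, \cN(X)}$, and interchange sum and trace; the transpose passes onto $A_A^\bmu$, producing the factor $\tfrac{1}{d_B}\trace{\left((A_A^\bmu)^T \ox A_B^\bmv\right)\cJ_\cN}$, which is exactly the matrix entry $(W_\cN)_{\bmv\bmu}$. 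Collecting terms gives $(W_{\cN(X)})_\bmv = \sum_\bmu (W_\cN)_{\bmv\bmu}\,(W_X)_\bmu$, i.e. $W_{\cN(X)} = W_\cN W_X$.

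Part (3) I would deduce from (1) and (2) rather than recompute. By (2), the real vectors of $\CC^{d^2}$ are precisely $\{W_X : X \text{ Hermitian}\}$; in particular each real standard basis vector $e_\bmu$ equals $W_X$ for some Hermitian $X$. If $\cN$ is Hermitian-preserving, then $W_\cN W_X = W_{\cN(X)}$ is real for every Hermitian $X$, and taking $W_X = e_\bmu$ shows every column of $W_\cN$ is real. Conversely, if $W_\cN$ is real, then for Hermitian $X$ (hence real $W_X$) the product $W_\cN W_X = W_{\cN(X)}$ is real, so $\cN(X)$ is Hermitian by (2). For (4) I would expand the inner product directly: with $\overline{(W_X)_\bmu} = \tfrac1d\trace{A_\bmu X^\dagger}$ and $(W_{\ketbra{\psi}{\psi}})_\bmu = \tfrac1d\bra{\psi}A_\bmu\ket{\psi}$, the completeness relation collapses the $\bmu$-sum to $\inner{W_X}{W_{\ketbra{\psi}{\psi}}} = \tfrac1d\bra{\psi}X^\dagger\ket{\psi}$. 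Requiring this to be nonnegative (and in particular real) for every $\ket{\psi}$ forces $X^\dagger$, hence $X$, to be positive semidefinite, via the standard fact that $\bra{\psi}X\ket{\psi}\geq 0$ for all $\ket{\psi}$ implies $X$ is positive semidefinite.

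I expect the one genuinely delicate step to be (1): one must track the transpose and conjugation conventions in the Choi formula and insert the completeness expansion on the correct tensor factor, since an index or conjugation slip there propagates silently into (3). The remaining subtlety is the ``real vectors are images of Hermitian operators'' observation underpinning (3), which must be phrased carefully because $X \mapsto W_X$ restricts to an $\RR$-linear bijection between the Hermitian operators and $\RR^{d^2}$; once that is in place the rest is routine bookkeeping.
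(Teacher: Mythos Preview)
Your argument is correct. Parts (1), (2), and (4) are essentially the paper's proof in different packaging: the paper also expands $X$ in the $A_\bmu$-basis for (1), invokes Hermiticity of $A_\bmu$ for (2), and computes $\inner{W_X}{W_Y}$ via the Hilbert--Schmidt orthogonality of the $A_\bmu$ for (4). Your treatment of (4) is actually slightly more careful than the paper's, which silently drops a factor of $d$ and a conjugation (harmless for the sign condition, but you track both).

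Where you genuinely diverge is (3). The paper proves the nontrivial direction directly at the Choi level: it splits $\cJ_\cN = H + S$ into Hermitian and anti-Hermitian parts and shows that reality of every matrix entry $\trace{((A_A^\bmu)^T\ox A_B^\bmv)\cJ_\cN}$ forces $\trace{A_{\bmu'\oplus\bmv}(iS)}=0$ for all $\bmu',\bmv$, hence $S=0$. You instead bootstrap from (1) and (2): since $W_{A_\bmu}=e_\bmu$ and $A_\bmu$ is Hermitian, the columns of $W_\cN$ are the images $W_{\cN(A_\bmu)}$, and (2) converts reality of these into Hermiticity of $\cN(A_\bmu)$; the converse is immediate because real matrix times real vector is real. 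Your route is cleaner and avoids the somewhat ad hoc index manipulation $\bmu\mapsto\bmu'$ with $(A_\bmu)^T=A_{\bmu'}$ that the paper uses; the paper's route, on the other hand, is self-contained and does not rely on having (1) in hand first.
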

\begin{proof}
1. The result follows by the definition of stochastic Wigner representation and the properties of phase-space point operators given in Appendix~\ref{appendix:Wigner function}:
\begin{align}
    W_\cN\, W_X 
    &= \sum_{\bmu, \bmv} \cW_{\cN}\left(\bmv \mid \bmu\right) \cW_{X}\left(\bmu\right) \ket{v_1, v_2} \\
    &= \frac{1}{d_A d_B} \sum_{\bmu, \bmv} \trace{ \left(\left(A_A^\bmu\right)^T \ox A_B^\bmv \right) \cJ_\cN} \trace{A_A^\bmu X} \ket{v_1, v_2} \\
    &= \frac{1}{d_A d_B} \sum_\bmv \left( \sum_\bmu \trace{ A_B^\bmv \cdot \cN \left( A_A^\bmu \right) } \trace{A_A^\bmu X} \right) \ket{v_1, v_2} \\
    &= \frac{1}{d_B} \sum_\bmv \trace{  A_B^\bmv \cdot \cN \left( \frac{1}{d_A} \sum_\bmu \trace{A_A^\bmu X} A_A^\bmu \right) } \ket{v_1, v_2} \\
    &= \frac{1}{d_B} \sum_\bmv \trace{  A_B^\bmv \cdot \cN \left( X \right) } \ket{v_1, v_2}  
    = W_{\cN \left( X \right)}
.\end{align}

2. The statement trivially follows by the fact that phase-space point operators are Hermitian.

3. It suffices to prove the backward direction. Suppose $W_\cN$ is real. Decompose the Choi representation of $\cN$ as
\begin{equation}
\cJ_\cN = H + S,\quad \textrm{where } H = \frac{\cJ_\cN + \cJ_\cN^\dag}{2},\, S = \frac{\cJ_\cN - \cJ_\cN^\dag}{2}
.\end{equation}
Observe that $H$ and $iS$ are Hermitian. Then denote point $\bmu'$ such that $A_B^{\bmu'} = \left(A_B^\bmu\right)^T$. We have
\begin{equation}
\trace{ \left(\left(A_A^\bmu\right)^T \ox A_B^\bmv \right) \cJ_\cN} = \trace{A_{\bmu' \oplus \bmv} H} - i \trace{A_{\bmu' \oplus \bmv} \left(iS\right)}
.\end{equation}
Since $A_{\bmu' \oplus \bmv}$ is Hermitian, $\trace{A_{\bmu' \oplus \bmv} H}, \trace{A_{\bmu' \ox \bmv} \left(iS\right)} \in \RR$. Since $W_\cN$ is real,
\begin{equation}
\trace{A_{\bmu' \oplus \bmv} \left(iS\right)} = 0 \implies iS = 0 \implies \cJ_\cN = \cJ_\cN^\dag
.\end{equation}
That is, $\cJ_\cN$ is Hermitian and hence $\cN$ is HP.

4. For every matrix $Y$,
\begin{equation}
    \trace{X Y} = \trace{\sum_{\bmu, \bmv} \cW_X(\bmu) \cW_Y(\bmv) A_\bmu A_\bmv} = \sum_\bmu \cW_X(\bmu) \cW_Y(\bmu) = W_X^\dagger W_Y
.\end{equation}
Then the statement follows by the definition of positive semi-defnite,
\begin{equation} \label{eqn:state psd}
    X \succeq 0 
    \Longleftrightarrow \braandket{\psi}{X}{\psi} = \trace{X \ketbra{\psi}{\psi}} \geq 0 \,\, \forall\, \ket{\psi}
    \Longleftrightarrow \inner{W_X}{W_{\ketbra{\psi}{\psi}}} \geq 0 \,\, \forall\, \ket{\psi}
.\end{equation}
\end{proof}

\section{Reversibility of Exact Magic Manipulation}\label{sec:reversibility}

In this section, we venture beyond the confines of the quantum realm to thoroughly characterize the fundamental limits of magic transformation.
We establish that the monotonicity of magic mana aligns with the existence of a set of non-physical operations, called the PWP quasi-operations ($\PWPq$ operations). As shown in Figure~\ref{fig:hierarchy}, these operation are defined as the Hermitian-preserving and trace-preserving maps that completely preserve the positivity of their Wigner functions. The underlying intuition is that $\PWPq$ operations are tightly related with the stochastic transformation of quasi-probability distributions, as shown in the following statement.

\begin{proposition}\label{prop:stochastic formalism}
    $\cN_{A \to B}$ is a $\PWPq$ operation if and only if $W_\cN$ is a column stochastic matrix.
\end{proposition}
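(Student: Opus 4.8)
The plan is to compute the entries of $W_\cN$ explicitly and match the three defining features of a column stochastic matrix — real entries, non-negativity, and unit column sums — against the three conditions defining a $\PWPq$ operation: Hermitian preservation, complete positivity-of-Wigner preservation, and trace preservation. As already extracted inside the proof of Lemma~\ref{lem:sto Wigner properties}, the entry of $W_\cN$ in row $\bmv$ and column $\bmu$ is
\begin{equation}
(W_\cN)_{\bmv,\bmu} = \frac{1}{d_B}\trace{\left((A_A^\bmu)^T \ox A_B^\bmv\right)\cJ_\cN} = \cW_\cN(\bmv \mid \bmu),
\end{equation}
so the $\bmu$-th column of $W_\cN$ is exactly the Wigner function of $\cN(A_A^\bmu)$. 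The realness requirement is then immediate: a column stochastic matrix has real entries, and by Lemma~\ref{lem:sto Wigner properties}(3) the matrix $W_\cN$ is real if and only if $\cN$ is Hermitian-preserving, settling this condition in both directions at once.

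For non-negativity I would invoke the standard algebra of phase-space point operators collected in Appendix~\ref{appendix:Wigner function}: the transpose of a point operator is again a point operator, say $(A_A^\bmu)^T = A_A^{\bmu'}$, and point operators factorize across tensor products, $A_A^{\bmu'}\ox A_B^\bmv = A_{AB}^{\bmu'\oplus\bmv}$. Substituting these into the entry gives $(W_\cN)_{\bmv,\bmu} = d_A\,\cW_{\cJ_\cN}(\bmu'\oplus\bmv)$, i.e. up to the positive constant $d_A$ it is the value of the Wigner function of the Choi operator at the composite point $\bmu'\oplus\bmv$. Because transposition permutes the phase space bijectively, as $(\bmu,\bmv)$ runs over all index pairs the point $\bmu'\oplus\bmv$ ranges over the entire composite phase space; hence every entry of $W_\cN$ is non-negative exactly when $\cW_{\cJ_\cN}\geq 0$ everywhere, which is precisely complete positivity-of-Wigner preservation.

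For the column sums I would use the resolution $\sum_\bmv A_B^\bmv = d_B I_B$ (equivalently $\sum_\bmv \cW_Y(\bmv) = \trace{Y}$) to obtain $\sum_\bmv (W_\cN)_{\bmv,\bmu} = \trace{\cN(A_A^\bmu)}$. Since $\trace{A_A^\bmu}=1$, trace preservation forces every column sum to equal $1$; conversely, if all column sums are $1$ then $\trace{\cN(A_A^\bmu)}=\trace{A_A^\bmu}$ for every $\bmu$, and because the point operators span $\cL(\cH_A)$, linearity upgrades this to $\trace{\cN(X)}=\trace{X}$ for all $X$, i.e. $\cN$ is trace-preserving. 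Assembling the three equivalences proves the proposition.

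The step I expect to be the main obstacle is the non-negativity equivalence, since it requires carefully pinning down the identification of $(W_\cN)_{\bmv,\bmu}$ — up to a positive scalar — with the Choi-operator Wigner function $\cW_{\cJ_\cN}$, which rests on the transpose and tensor-factorization identities for point operators and on tracking the exact normalization constants; one also has to confirm that the operative definition of complete positivity-of-Wigner preservation is indeed the non-negativity of $\cW_{\cJ_\cN}$ across the whole composite phase space.
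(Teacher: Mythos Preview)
Your proposal is correct and follows the same three-way matching as the paper: realness $\leftrightarrow$ HP via Lemma~\ref{lem:sto Wigner properties}(3), non-negativity $\leftrightarrow$ PWP, and unit column sums $\leftrightarrow$ TP via $\sum_\bmv A_B^\bmv = d_B I_B$ together with the fact that $\{A_A^\bmu\}_\bmu$ spans $\cL(\cH_A)$. The only notable difference is that the paper handles the non-negativity $\leftrightarrow$ PWP step in one line, citing Equation~\eqref{eqn:sto Wigner function} directly: since the entries of $W_\cN$ are by definition $\cW_\cN(\bmv\mid\bmu)$, their non-negativity \emph{is} the paper's operative definition of positive-Wigner-preserving, so no detour through $\cW_{\cJ_\cN}$ is needed. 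Your route via the transpose identity $(A_A^\bmu)^T = A_A^{\bmu'}$ and the tensor factorization $A_A^{\bmu'}\ox A_B^\bmv = A_{AB}^{\bmu'\oplus\bmv}$ is perfectly valid and recovers the same condition, but it is more machinery than the paper actually invokes; the ``main obstacle'' you flag dissolves once you recognize that in this paper PWP is taken to mean non-negativity of the channel Wigner function $\cW_\cN(\bmv\mid\bmu)$ itself.
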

\begin{proof}
    Lemma~\ref{lem:sto Wigner properties} states that $W_\cN$ is real if and only if $\cN$ is Hermitian-preserving; Equation~\eqref{eqn:sto Wigner function} implies that $W_\cN$ is non-negative if and only if $\cN$ is positive-Wigner-preserving; one also observes that $\bmone^T W_\cN = \bmone^T$ for all $\bmu$ if and only if $\cN$ is trace-preserving, since
\begin{align}
    \trace{\cN \left( A_A^\bmu \right)} 
    &= \trace{ \left(\left(A_A^\bmu\right)^T \ox I_B \right) \cJ_\cN}
    = \sum_\bmv \frac{1}{d_B} \trace{ \left(\left(A_A^\bmu\right)^T \ox A_B^\bmv \right) \cJ_\cN} \\
    &= \sum_\bmv  \braandket{ v_1, v_2 }{W_\cN}{ u_1, u_2 }
    = \bmone^T W_\cN \ket{u_1, u_2}
\end{align}
    and $\set{ A_A^\bmu }_\bmu$ forms a basis of $\cL \left( \cH_A \right)$. We therefore conclude that $W_\cN$ is real, non-negative and its columns sum to $1$ if and only if $\cN$ is HP, PWP and TP. In other words, $W_\cN$ is a column stochastic matrix if and only if $\cN$ is $\PWPq$.
\end{proof}

In the context of the Wigner representation, quantum states can be depicted as quasi-probability distributions, which allows for the possibility of stochastic manipulation through $\PWPq$ operations. Leveraging this property, we demonstrate that a quantum state $\rho$ can be transformed to $\sigma$ under $\PWPq$ operations if and only if the mana of $\rho$ exceeds the magic mana of the target state $\sigma$. This finding establishes a parallel in characterizing magic manipulation to the role of logarithmic negativity in characterizing entanglement manipulation within entanglement theory~\cite{wang2023reversible}. The necessary and sufficient condition for this transformation are detailed in the following theorem~\ref{thm:magic transfer}. 

\begin{figure}[t]
    \centering
    \includegraphics[width=0.4\linewidth]{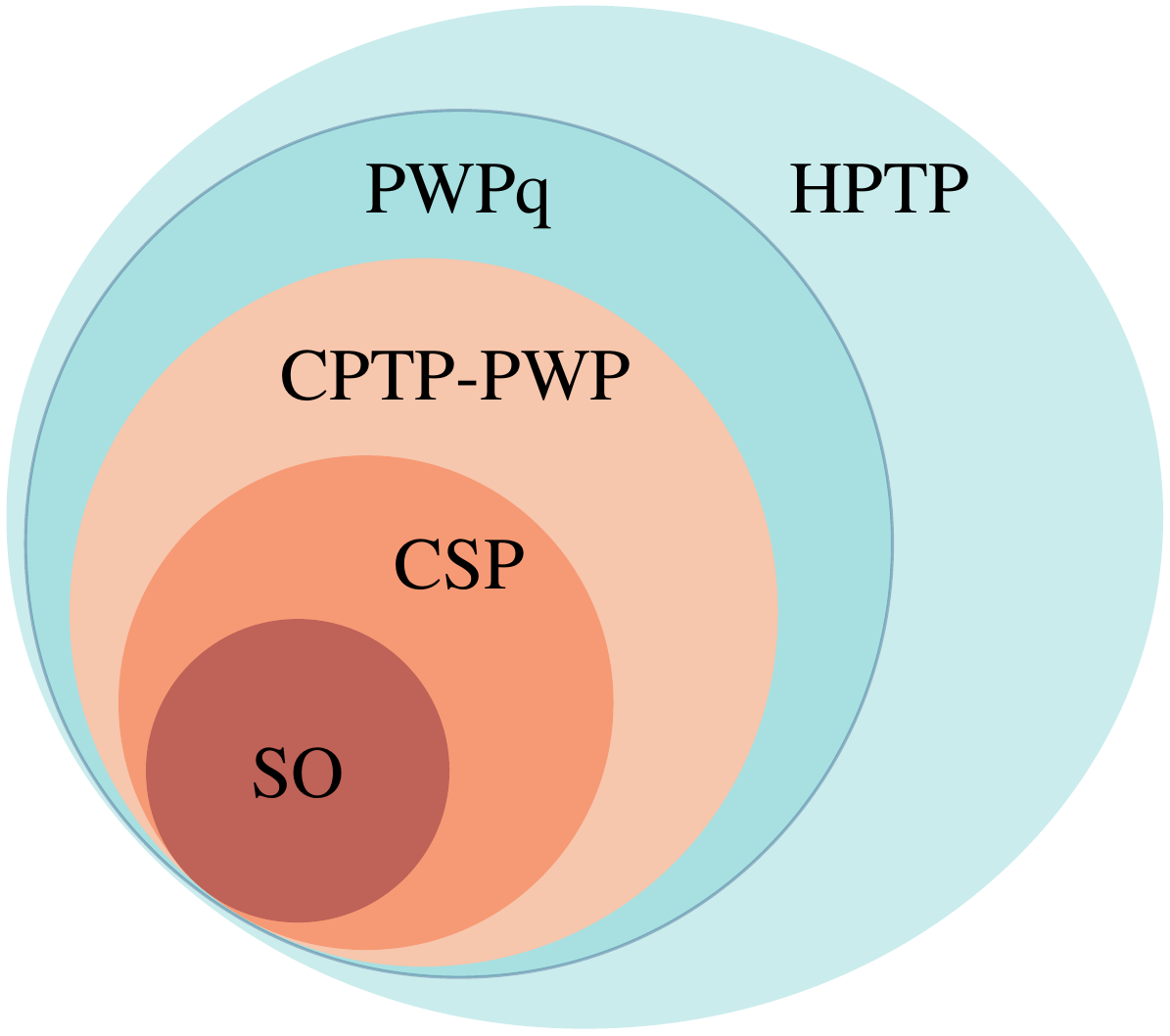}
    \begin{tabular}{cc}
    \toprule
    \midrule
    Class of Operations & Abbreviation \\
    \midrule
    Stabilizer operations & SO \\
    Completely stabilizer-preserving operations & CSP \\
    Completely positive, trace-preserving, completely positive-Wigner-preserving maps & CPTP-PWP \\
    Hermitian-preserving, trace-preserving, positive-Wigner-preserving maps  & PWPq  \\
    Hermitian-preserving and trace-preserving maps & HPTP \\
    \addlinespace
    \midrule
    \bottomrule
    \end{tabular}
    \caption{Schematic hierarchy of operations. The pictured inclusions among $\text{SO}$, $\text{CSP}$, $\text{CPTP-PWP}$, $\text{PWPq}$, $\text{HPTP}$. The main focus of this paper is exact magic manipulation under PWPq operations.}
    \label{fig:hierarchy}
\end{figure}

\begin{tcolorbox}[width = 1.0\linewidth]
\begin{theorem}\label{thm:magic transfer}
    For two odd dimensional quantum states, there exists $\PWPq$ operation $\cN$ such that $\cN(\rho) = \sigma$ if and only if
\begin{equation}
    \cM(\rho) \geq \cM(\sigma),
    \end{equation}
    where $\cM(\cdot)$ is the mana of the state.
\end{theorem}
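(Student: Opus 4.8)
The plan is to pass entirely to the stochastic Wigner picture and reduce the statement to a transportation-type question about real vectors. Writing $p \coloneqq W_\rho$ and $q \coloneqq W_\sigma$ for the stochastic Wigner vectors, I first record three facts: these vectors are real (Lemma~\ref{lem:sto Wigner properties}, item 2); their entries are exactly the Wigner values $\cW_\rho(\bmu)$, so that $\cM(\rho) = \log\norm{p}_1$ and $\cM(\sigma) = \log\norm{q}_1$; and $\bmone^T p = \bmone^T q = 1$, since a state has unit trace and $\sum_\bmu A_\bmu = dI$. By Proposition~\ref{prop:stochastic formalism} a $\PWPq$ map is precisely one whose representation $W_\cN$ is column stochastic, and by Lemma~\ref{lem:sto Wigner properties}, item 1, we have $W_{\cN(\rho)} = W_\cN\, p$. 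Since both $X \mapsto W_X$ and $\cN \mapsto W_\cN$ are linear bijections (the operators $\{(A_A^\bmu)^T \ox A_B^\bmv\}$ form a basis), the existence of a $\PWPq$ map sending $\rho$ to $\sigma$ is equivalent to the existence of a column stochastic $M$ with $Mp = q$. The theorem thus becomes: such an $M$ exists if and only if $\norm{q}_1 \le \norm{p}_1$.

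For the forward direction I would simply invoke the $\ell_1$-contractivity of column stochastic matrices: if $M \ge 0$ entrywise with $\bmone^T M = \bmone^T$ and $q = Mp$, then $\norm{q}_1 = \sum_j \abs{\sum_i M_{ji}p_i} \le \sum_i\bigl(\sum_j M_{ji}\bigr)\abs{p_i} = \norm{p}_1$, which gives $\cM(\sigma) \le \cM(\rho)$. This is exactly the monotonicity of mana under $\PWPq$ operations.

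The substance is the converse construction. I would split $p = p_+ - p_-$ and $q = q_+ - q_-$ into nonnegative positive and negative parts and set $P_\pm = \norm{p_\pm}_1$, $Q_\pm = \norm{q_\pm}_1$. The trace condition forces $P_+ - P_- = Q_+ - Q_- = 1$, while $\norm{q}_1 \le \norm{p}_1$ gives the common surplus $P_+ - Q_+ = P_- - Q_- = \tfrac12(\norm{p}_1 - \norm{q}_1) \eqqcolon \Delta \ge 0$. The key observation is that a single slack vector can absorb the entire mismatch: fix any $c \ge 0$ with $\bmone^T c = \Delta$, and build $M$ by assigning to every column $i$ with $p_i > 0$ the probability vector $a = (q_+ + c)/P_+$ and to every column with $p_i < 0$ the probability vector $b = (q_- + c)/P_-$ (columns at $p_i = 0$ are arbitrary). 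Then $Mp = P_+ a - P_- b = q_+ - q_- = q$, and the normalizations $\bmone^T a = (Q_+ + \Delta)/P_+ = 1$ and $\bmone^T b = (Q_- + \Delta)/P_- = 1$ confirm that $M$ is column stochastic. Pulling $M$ back through the bijection of Proposition~\ref{prop:stochastic formalism} yields a $\PWPq$ operation $\cN$ with $W_{\cN(\rho)} = W_\cN\,p = q = W_\sigma$, hence $\cN(\rho) = \sigma$ by injectivity of $X \mapsto W_X$.

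I expect the main obstacle to be bookkeeping around degeneracies rather than the core idea, which the slack construction above already settles. In particular one must handle the case $P_- = 0$ (where $\rho$ already has a nonnegative Wigner function, $b$ is undefined, and one sets every column equal to $q$), and note that $\norm{q}_1 \ge \abs{\bmone^T q} = 1$ always holds so that $\Delta \ge 0$ is genuinely the only binding constraint. Beyond the $\ell_1$ contraction and the explicit matrix $M$, no new estimate is required, since the passage between $\cN$ and $W_\cN$ and the HP$+$TP$+$PWP membership are supplied verbatim by Proposition~\ref{prop:stochastic formalism} and Lemma~\ref{lem:sto Wigner properties}.
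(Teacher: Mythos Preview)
Your proposal is correct and matches the paper's alternative (constructive) proof almost exactly: the same reduction to column-stochastic transport via Proposition~\ref{prop:stochastic formalism} and Lemma~\ref{lem:sto Wigner properties}, the same $\ell_1$-contractivity for the forward direction, and the same rank-one-per-sign-block construction of $M$ for the converse (your slack vector $c$ simply generalises the paper's choice of concentrating all the surplus $\Delta$ in one designated row). The paper also offers a first proof via Farkas' Lemma that reaches the inequality $p_+\ge q_+$ through linear-programming duality, but your direct argument coincides with their second proof.
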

\end{tcolorbox}
\begin{proof}
    Lemma~\ref{lem:sto Wigner properties} and Proposition~\ref{prop:stochastic formalism} states that $\cN(\rho) = \sigma$ for a HPWP-PWP operation $\cN$ if and only if $W_\cN W_\rho = W_\sigma$ for a column stochastic matrix $W_\cN$. 
    For simplicity, denote
\begin{equation}~\label{eqn:sto notation}
    W = W_\cN = \begin{bmatrix} \,w_1^T\, \\ \vdots \\ \,w_n^T\, \end{bmatrix} \in \RR^{d \times {d'}}, 
    \bmp = W_\rho = \begin{bmatrix} \,p_1\, \\ \vdots \\ \,p_m\, \end{bmatrix} \in \RR^{d'},
    \textrm{ and } \bmq = W_\sigma \in \RR^d
.\end{equation}
    Then one can translate the existence of such $\cN$ mapping $\rho$ to $\sigma$ as the existence of $\mathbf{w}\in\RR^{dd'}$ satisfying
\begin{equation}~\label{eqn:farkas constraint}
    \begin{bmatrix}
        p_1 I & p_2 I & \cdots & p_m I \\
        \bmone^T & {0} & {\cdots} & {0} \\
        {0} & \bmone^T & {\cdots} & {0} \\
        {\vdots} & {\vdots} & {\ddots} & {\vdots} \\
        {0} & {0} & {\cdots} & {\bmone^T} \\
    \end{bmatrix} \begin{bmatrix}
        w_1 \\ w_2 \\ \vdots \\ w_m
    \end{bmatrix} = \begin{bmatrix}
        \,\,\bmq\,\,\,\, \\ 1 \\ 1 \\ \vdots \\ 1
    \end{bmatrix}, \quad
    \textrm{or in short } A \mathbf{w} = \mathbf{b}
,\end{equation}
    where $I$ is the $d$-dimensional identity matrix and $\bmone$ is the vector of $d$ ones. From Farkas' Lemma it follows that Equation~\eqref{eqn:farkas constraint} holds if and only if
\begin{equation}~\label{eqn:farkas condition}
    \forall\, \mathbf{r} \in \RR^d,\forall\, \mathbf{t} \in \RR^{d'}, A^T \big[\begin{smallmatrix} \mathbf{r} \\ -\mathbf{t} \end{smallmatrix}\big]\geq 0 \implies \mathbf{b}^T\big[\begin{smallmatrix} \mathbf{r} \\ -\mathbf{t} \end{smallmatrix}\big] \geq 0,
\end{equation}
    which is simplified as 
\begin{equation}
    \forall\, \mathbf{r} \in \RR^d,\forall\,\mathbf{t} \in \RR^{d'}, \text{each } t_j \leq \min_{1 \leq k \leq d} p_j r_k \implies \sum_{j = 1}^{d'} t_j \leq \mathbf{r}^T \bmq,
\end{equation}
when substituting $A^T$ and $\mathbf{b}$. Since $\mathbf{t}$ is free, by pushing each $t_j$ towards its upper bound, above equation is equivalent to 
\begin{equation}~\label{eqn:p q ineq}
    \forall\, \mathbf{r} \in \RR^d,\sum_{j = 1}^{d'} \min_{1 \leq k \leq d} p_j r_k \leq \mathbf{r}^T \bmq
.\end{equation}
    Considering the left hand equals to $p_+ r_{\min} + (1 - p_+) r_{\max}$ only depending on $p_+, r_{\min}$ and $r_{\max}$, where $p_+, 1 - p_+$ are the sum of the positive and negative components of $\bmp$, respectively, we could rewrite such equation as 
\begin{equation} 
    p_+ r_{\min} + (1 - p_+) r_{\max} \leq \mathbf{r}^T \bmq, \quad 
    \forall\,r_{\min}\le r_{\max}, \forall\, \mathbf{r} \in\left[r_{\min},r_{\max}\right]^d
,\end{equation}
    or
\begin{equation} 
    p_+ r_{\min} + (1 - p_+) r_{\max} \leq \min_{\mathbf{r} \in\left[r_{\min},r_{\max}\right]^d}\mathbf{r}^T \bmq= q_+ r_{\min}  + (1 - q_+) r_{\max}, \quad
    \forall\,r_{\min}\le r_{\max}
,\end{equation}
which is 
\begin{equation}
    \left(r_{\max} - r_{\min}\right) (p_+ - q_+) \geq 0, \quad
    \forall\,r_{\min}\le r_{\max}
,\end{equation}
and is equivalent to
\begin{equation}
    p_+ \geq q_+,\text{ also }\norm{W_\rho}_1\geq\norm{W_\sigma}_1,
\end{equation}
where $q_+, 1 - q_+$ are the sum of the positive and negative components of $\bmq$, respectively.
Recall that $\cM(\rho) = \log \norm{W_\rho}_1$, $\cM(\sigma) = \log \norm{W_\sigma}_1$, and Equation~\eqref{eqn:farkas constraint} gives the existence of $\cN$. One can therefore conclude that $\rho$ can be transformed to $\sigma$ under $\PWPq$ operations if and only if $\cM(\rho) \geq \cM(\sigma)$ .
\end{proof}
\vspace{1em}

We also provide an alternative proof of Theorem~\ref{thm:magic transfer} with matrix manipulations only. We use the same notation in Equation~\eqref{eqn:sto notation}, and further assume that $\bmp = \left(\begin{smallmatrix}
    \bmp^+ \\ -\bmp^-
\end{smallmatrix}\right)$ and $\bmq = \left(\begin{smallmatrix}
    \bmq^+ \\ -\bmq^-
\end{smallmatrix}\right)$ for $\bmp^\pm, \bmq^+ \geq 0, \bmq^- > 0$. Such assumption is without loss of generality since permutation transformations are doubly stochastic (i.e., column stochastic and row stochastic).

\vspace{1em}
\begin{proof} 
($\implies$) $W \bmp = \bmq$ implies 
\begin{equation} 
    W \begin{pmatrix} \bmp^+ \\ {\bm 0} \end{pmatrix} \geq \begin{pmatrix} \bmq^+ \\ -\bmq^- \end{pmatrix}, 
    \quad W \begin{pmatrix} {\bm 0} \\ \bmp^- \end{pmatrix} \geq \begin{pmatrix} -\bmq^+ \\ \bmq^- \end{pmatrix}
.\end{equation}
Since $W, \bmp^+, \bmp^-$ have non-negative entries,
$W \begin{pmatrix} \bmp^+ \\ {\bm 0} \end{pmatrix}, W \begin{pmatrix} {\bm 0} \\ \bmp^- \end{pmatrix}  \geq 0$ and hence
\begin{equation}
    \norm{\bmp^+}_1 
    = {\bm 1}^T \begin{pmatrix} \bmp^+ \\ {\bm 0} \end{pmatrix} 
    = {\bm 1}^T W \begin{pmatrix} \bmp^+ \\ {\bm 0} \end{pmatrix} 
    \geq {\bm 1}^T \begin{pmatrix} \bmq^+ \\ {\bm 0} \end{pmatrix}  
    = \norm{\bmq^+}_1
.\end{equation}
Similar idea holds for proving $\norm{\bmp^-}_1 \geq \norm{\bmq^-}_1$. Then the result follows as $\bmp$ and $\bmq$ sum to one.

($\impliedby$) Denote $\bmq^- \in \RR^{l}$. To start with, suppose $l \geq 2$. Then denote $\hat\bmq^-$ as the vector of the first $(l - 1)$ entries of $\bmq^-$ so that
\begin{equation}
    \bmq^- = \begin{pmatrix} \hat\bmq^- \\ {\bm 1}^T\left( \bmq^- - \hat\bmq^- \right) \end{pmatrix},
    \bmq = \begin{pmatrix} \bmq^+ \\ -\hat\bmq^- \\ 1 - {\bm 1}^T\left( \bmq^+ - \hat\bmq^- \right) \end{pmatrix}
.\end{equation}
    Since $\norm{ \bmp }_1 \geq \norm{ \bmq }_1$ and $\norm{\bmp^+}_1 - \norm{\bmp^-}_1 = \norm{\bmq^+}_1 - \norm{\bmq^-}_1$, $\norm{\bmp^\pm}_1 \geq \norm{\bmq^\pm}_1$.
    The desired $W$ can then be constructed as follows:
\begin{align}
    W = \begin{pmatrix}
        W^+ & 0 \\
        0 & W^- \\
        {{\bm c}^+}^T & {{\bm c}^-}^T
    \end{pmatrix}
    \textrm{ with }
    W^+ &= \frac{\bmq^+ {\bm 1}^T}{\norm{\bmp^+}_1}, 
    W^- = \frac{\hat\bmq^- {\bm 1}^T}{\norm{\bmp^-}_1}, \textrm{ and} \\
    {\bm c}^+ &= \left( 1 - \frac{\norm{\bmq^+}_1}{\norm{\bmp^+}_1} \right) {\bm 1}, 
    {\bm c}^- = \left( 1 - \frac{\norm{\hat\bmq^-}_1}{\norm{\bmp^-}_1} \right) {\bm 1}
.\end{align}
    From above construction, observe that $A$ is column stochastic and
\begin{align}
    {{\bm c}^+}^T \bmp^+ - {{\bm c}^-}^T \bmp^- &= \left( \norm{\bmp^+}_1 - \norm{\bmq^+}_1 \right) - \left( \norm{\bmp^-}_1 - \norm{\hat\bmq^-}_1 \right) \\
    &= 1 - \left( \norm{\bmq^+}_1 - \norm{\hat\bmq^-}_1 \right) = 1 - {\bm 1}^T\left( \bmq^+ - \hat\bmq^- \right) 
.\end{align}
    Consequently, we have
\begin{align}
    W\bmp &= \begin{pmatrix}
        W^+ & 0 \\
        0 & W^- \\
        {{\bm c}^+}^T & {{\bm c}^-}^T
    \end{pmatrix} \begin{pmatrix}
        \bmp^+ \\ -\bmp^-
    \end{pmatrix}
    = \begin{pmatrix}
        W^+ \bmp^+ \\ - W^- \bmp^- \\ {c^+}^T \bmp^+ - {c^-}^T \bmp^-
    \end{pmatrix}
    = \begin{pmatrix}
        \bmq^+ \\ -\hat\bmq^- \\ 1 - {\bm 1}^T \left(\bmq^+ - \hat\bmq^-\right)
    \end{pmatrix} = \bmq
,\end{align}
    as required. In the case when $l = 0$, both $\bmp$ and $\bmq$ have no negative components and hence $W = W^+$; when $l = 1$ i.e., $\bmq$ has only one negative entry, one do not need to consider $W^-$ and hence $W = \left(\begin{smallmatrix}
        W^+ & 0 \\ {{\bm c}^+}^T & {{\bm c}^-}^T
    \end{smallmatrix}\right)$.
\end{proof}

To achieve universal quantum computation and unlock potential quantum advantages, it is essential to incorporate and manipulate a large number of identically and independently distributed copies of magic states. Consequently, understanding the transformation rate associated with converting between different magic states becomes crucial. Building on Theorem~\ref{thm:magic transfer}, we here provide the asymptotic exact magic transformation rate between magic states and establish the reversibility of these transformations under $\PWPq$ operations in the asymptotic regime.
\begin{shaded}
    \begin{theorem}\label{thm:reversible}
    For any two odd dimensional states $\rho$ and $\sigma$, the asymptotic exact magic transformation rate is given by
\begin{align}
    R_{\PWPq}(\rho\to\sigma)=\frac{\cM(\sigma)}{\cM(\rho)},
\end{align}
    which implies the reversibility of asymptotic exact magic state manipulation under $\PWPq$ operations, i.e.,
\begin{equation}
    R_{\Omega}(\rho \to \sigma) \times R_{\Omega}(\sigma \to \rho) = 1.
\end{equation}
\end{theorem}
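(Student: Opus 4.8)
The plan is to reduce the asymptotic statement to the single-shot characterization of Theorem~\ref{thm:magic transfer} by invoking the additivity of mana, after which the rate formula follows from an elementary counting argument. The only ingredient beyond Theorem~\ref{thm:magic transfer} is additivity, $\cM(\rho^{\otimes n}) = n\,\cM(\rho)$, which is already listed among the properties of mana; everything else is bookkeeping with the definition of the asymptotic conversion ratio. The key observation is that $\rho^{\otimes n}$ and $\sigma^{\otimes m}$ are themselves odd-dimensional states (their dimensions $d_\rho^{\,n}$ and $d_\sigma^{\,m}$ stay odd), so Theorem~\ref{thm:magic transfer} applies verbatim to this pair: a $\PWPq$ map $\Lambda_n$ with $\Lambda_n(\rho^{\otimes n}) = \sigma^{\otimes m}$ exists if and only if $\cM(\rho^{\otimes n}) \geq \cM(\sigma^{\otimes m})$, which by additivity reads $n\,\cM(\rho) \geq m\,\cM(\sigma)$.

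From this single inequality both halves of the rate computation fall out. For achievability I would take $m_n = \lfloor n\,\cM(\rho)/\cM(\sigma)\rfloor$; then $m_n\,\cM(\sigma)\leq n\,\cM(\rho)$ guarantees the map for every $n$, while $m_n/n \to \cM(\rho)/\cM(\sigma)$, so every rate below $\cM(\rho)/\cM(\sigma)$ is attainable. For the converse, any achievable $r$ forces $\lfloor rn\rfloor\,\cM(\sigma)\leq n\,\cM(\rho)$ for all large $n$; dividing by $n$ and letting $n\to\infty$ yields $r \leq \cM(\rho)/\cM(\sigma)$. The two bounds identify the supremum, and the reversibility claim then drops out by multiplying the forward and backward rates, whose product equals $1$ whenever both mana values lie in $(0,\infty)$.

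I expect the main obstacle to be bookkeeping rather than conceptual depth: one must fix a precise convention for $\sigma^{\otimes rn}$ when $rn\notin\NN$ (working with $\lfloor rn\rfloor$ and checking that the supremum is genuinely attained in the limit), and one must exclude the degenerate cases $\cM(\rho)=0$ or $\cM(\sigma)=0$, where the ratio or its reciprocal is ill-defined and reversibility fails; I would state strict positivity and finiteness of both mana values as standing hypotheses. The one point I would verify with care is the orientation of the ratio: since mana is \emph{non-increasing} under $\PWPq$ operations, the consumed state $\rho^{\otimes n}$ must dominate the produced state $\sigma^{\otimes m}$ in mana, which makes the forward rate scale as $\cM(\rho)/\cM(\sigma)$; I would reconcile this orientation against the displayed formula before finalizing.
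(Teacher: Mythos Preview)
Your approach is correct and matches what the paper does: the paper gives no formal proof of Theorem~\ref{thm:reversible}, only the intuitive paragraph immediately following it, which is exactly your argument (additivity of mana plus Theorem~\ref{thm:magic transfer} applied to the tensor powers). Your flag about the ratio orientation is also right: with the paper's own definition of $R_\Omega(\rho\to\sigma)$ one gets $r\le \cM(\rho)/\cM(\sigma)$, so the displayed formula $\cM(\sigma)/\cM(\rho)$ is inverted; the reversibility product is unaffected since it equals $1$ either way.
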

\end{shaded}

Theorem~\ref{thm:reversible} can be intuitively understood if we consider a scenario where $m$ copies of a quantum state $\rho$ and $n$ copies of another state $\sigma$ are interchangeable under $\PWPq$ operations. In this case they own equal amounts of magic mana by Theorem~\ref{thm:magic transfer}, and hence the precise exchange rate between $\rho$ and $\sigma$ is determined by the ratio $m / n = \cM(\rho) / \cM(\sigma)$.
From this standpoint, Theorem~\ref{thm:reversible} establishes that the resource cost of preparing a magic state is balanced by the resource obtain from consuming it, with the exchange rate directly linked to the magic mana of the states involved.

\section{Physical Implementability of $\PWPq$ operations}~\label{sec:physical_imple}
Theorem~\ref{thm:magic transfer} posits that under $\PWPq$ operations, quantum states with identical magic mana can undergo reversible transformations amongst themselves. For instance, consider the Strange state $\ket{\mathbb{S}} \coloneqq (\ket{1} - \ket{2})/\sqrt{2}$ and the Norrell state $\ket{\mathbb{N}} \coloneqq (-\ket{0} + 2\ket{1} - \ket{2})/\sqrt{6}$, which have the same magic mana. According to Theorem~\ref{thm:magic transfer}, these states can be transformed reversibly into one another. However, the theorem does not specify which state possesses more ``magic'', nor does it address the comparative difficulty of transforming between these states. This leads to the pivotal question: how can we quantify the difficulty of transferring between quantum states when they exhibit the same magic mana? Addressing this question necessitates a more precise metric to establish a hierarchy of magic among states with identical magic mana.

To answer this question, we introduce a metric that quantifies the difficulty associated with transferring magic states under $\PWPq$ operations. According to the golden rule of quantum resource theory that each resource theory comes with a set of free states $\cF$ and a set of free operations $\cO$, and the minimum requirement for free operations is that they should not create resourceful states out of free states~\cite{RMP_Chitambar_2019} and in the similar spirit of \cite{jiang2021physical}, we define this quantity as follows,
\begin{definition}~\label{def:phy implement}
    For two quantum states $\rho, \sigma$, the \emph{physical implementability of magic manipulations} from $\rho$ to $\sigma$ is,
\begin{equation}
    \nu(\rho \to \sigma) = \log \min \{ \gamma \,\mid\, \cN = \sum\nolimits_j c_j \cN_j, \sum\nolimits_j c_j = 1, W_{\cN}, W_{\cN_j} \geq 0, \cN_j \in \CPTP \}
.\end{equation}
\end{definition}

By allowing the magic manipulation among these states contain transformation error, such that $\norm{\cN(\rho) - \sigma} \leq \eps$ for some error tolerance $\eps$, the physical implementability of $\varepsilon$-error state convertibility can be determined via the following SDP.
\begin{mini!}|s|
    { \cJ_{\cN_{1}}, \cJ_{\cN_{2}} }{ 2c - 1 }{}{}
    \addConstraint{ \cJ_{\cN} = \cJ_{\cN_1} - \cJ_{\cN_2}, \cJ_{\cN_{j}} \succeq 0 , \,\,\forall\, j  \text{ (HP)}}
    \addConstraint{ \tr_2 \cJ_{\cN_1} = c I, \tr_2 \cJ_{\cN_2} = (c - 1) I \text{ (TP)}}
    \addConstraint{ W_{\cN_1}, W_{\cN_2}, W_{\cN} \geq 0 \text{ (PWP)}}
    \addConstraint{ -\eps I \preceq \tr_B[(\rho^T \otimes I)\cdot\cJ_\cN] - \sigma \preceq \eps I,  \text{ (transformation within error)}}
\end{mini!}
The details of this SDP can be deferred to Appendix~\ref{appendix:sdp simplified} and the output of this SDP evaluates to $2^{\nu}$.  It is easy to see that when $\nu \geq 0$, there exists a $\PWPq$ operations to transform states and $\nu = 0$ if and only if the operations are CPTP-PWP. To achieve an estimation error within an error $\epsilon$ with a probability no less than $1-\delta$, the number of total sampling times $S$ can be estimated by Hoeffding's inequality as $S \geq 2(2c-1)^2 \log(2/\delta)/\epsilon^2$. Therefore, $\nu$ directly quantifies the hardness of quantum state convertibility statistically.

To illustrate the hardness of implementing $\PWPq$ operations physically, we consider the transformations between pairs of $\ket{\mathbb{S}}$, $\ket{\NN}$ and other two well-known magic states,
\begin{align}
    \ket{\mathbb{T}} &\coloneqq (e^{2\pi i / 9}\ket{0} + \ket{1} + e^{-2\pi i / 9}\ket{2})/\sqrt{3}, \\
    \ket{\mathbb{H}} & \textrm{ is the eigenstate of } \operatorname{QFT}_3 \textrm{ w.r.t. eigenvalue} +1
,\end{align}
where $\operatorname{QFT}_3$ is the quantum Fourier transform matrix of dimension 3.

\begin{figure}[htbp]
    \centering
    \includegraphics[width=\linewidth]{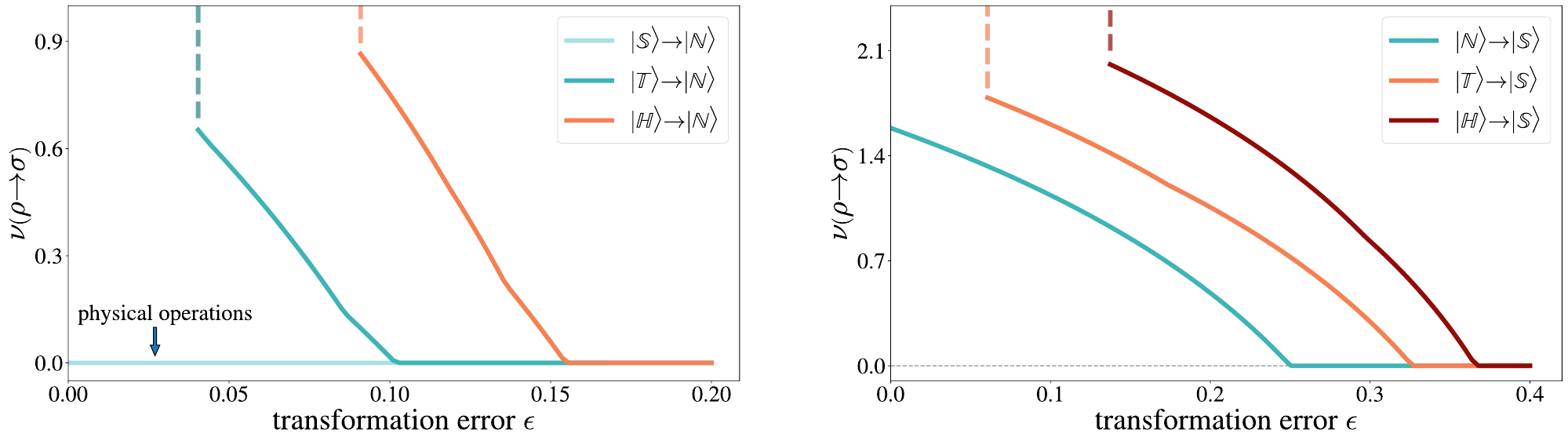}
    \caption{The physical implementability of $\PWPq$ transformations between pairs of states. Here $\eps$ is the transformation error, and the dashed line indicates the infeasibility of performing the transformations.}~\label{fig:transform w error}
\end{figure}

Figure~\ref{fig:transform w error} demonstrates that state transformations between $\ket{\mathbb{S}}$ and $\ket{\NN}$ via $\PWPq$ operations are feasible. The process is entirely physical when converting from $\ket{\mathbb{S}}$ to $\ket{\NN}$ but necessitates extra physical resources when reversing the direction. Conversely, states $\ket{\mathbb{T}}$ and $\ket{\mathbb{H}}$ not only require additional resources but also incur transformation errors in their conversion to $\ket{\NN}$ or $\ket{\mathbb{S}}$. 
Also, since the magic mana of the two states can be quantitatively ordered as $\cM(\ket{\mathbb{S}}) = \cM(\ket{\NN}) > \cM(\ket{\mathbb{T}}) > \cM(\ket{\mathbb{H}})$, these empirical observations corroborate the results detailed in Theorem~\ref{thm:magic transfer}. Despite $\ket{\NN}$ and $\ket{\mathbb{S}}$ have the same magic mana, the reversibility between $\ket{\NN}$ and $\ket{\mathbb{S}}$ is still physically hard to implement as suggested by $\nu(\ket{\NN} \to \ket{\mathbb{S}})$.

Overall, physical implementability suggests that the hardness of transforming these four states can be ranked as $\ket{\mathbb{S}}, \ket{\NN}, \ket{\mathbb{T}}, \ket{\mathbb{H}}$. This result can be extended to a finite number of resource states by initially comparing the magic mana of these states. Then for each pair of states with identical mana, we compare their physical implementability. This approach can heuristically suggest the desired hierarchy among the states. Furthermore, our experimental results indicate the transformation of certain magic states requires non-physical operations, implying the reversibility of magic state manipulation to be stemmed from the non-physicality of the operations involved.

We also note that our definition of the physical implementability of $\PWPq$ operations deviates from those in~\cite{Yuan2024}, where the HPTP map is allowed to generate resources. Specifically, their optimization program focuses on identifying the minimal sampling overhead and resource requirements necessary for quantum state convertibility. However, we impose a restriction requiring that the map be resource non-generating in our study. This adheres to the golden rules of quantum resource theory and aids in determining the additional costs associated with converting quantum states under $\PWPq$ operations, even when the two quantum states possess identical magic mana.

\section{Concluding Remarks}

Our work demonstrates the reversibility of exact magic state manipulation under $\PWPq$ operations. We introduce the stochastic formalism of magic state transformations, implying that these relaxed operations are essentially stochastic transitions among quasi-probability distributions. This finding allows us to identify the setting of magic transformations as one that is completely governed by a unique resource measure - magic mana, thus mirroring the parallels between the resource theories of magic states and thermodynamics. 
We further propose the physical implementability of magic transformation to analyze the physical cost of guaranteeing such reversibility. This quantity can offer a deeper characterization of the hardness associated with maintaining reversibility between magic states even with the identical magic mana.

This reversible theory of magic manipulation provides new guidance on finding the smallest subset of quantum operations that can guarantee the reversibility of asymptotic magic manipulation. It also contributes to the broader comprehension of the fundamental problem of general quantum resource transformation.
Note that the work~\cite{regula2024reversibility} demonstrates that the reversibility of general quantum resources can be recovered by expanding the set of operations into probabilistic protocols with non-vanishing probability of success. Together with this work, our findings shed light on studying the quantum resource manipulation between reversibility, success probability and positivity of allowed transformations of quantum resources. The physical implementability for reversible magic transformations also provides a new angle for understanding and exploring the quantification and manipulation of quantum resources.

\section*{Acknowledgement}
The authors are listed in alphabetical order.
This work was partially supported by the National Key R\&D Program of China (No. 2024YFE0102500), the Guangdong Provincial Quantum Science Strategic Initiative (No. GDZX2303007), the Start-up Fund (No. G0101000151) from The Hong Kong University of Science and Technology (Guangzhou), and the Education Bureau of Guangzhou Municipality.

\bibliographystyle{alpha}
\bibliography{references}


\newpage

\appendix
\setcounter{subsection}{0}
\setcounter{table}{0}
\setcounter{figure}{0}

\vspace{3cm}

\begin{center}
\Large{\textbf{Supplementary Material} \\ \textbf{
}}
\end{center}

\renewcommand{\theequation}{S\arabic{equation}}
\renewcommand{\thesubsection}{\normalsize{Supplementary Note \arabic{subsection}}}
\renewcommand{\theproposition}{S\arabic{proposition}}
\renewcommand{\thedefinition}{S\arabic{definition}}
\renewcommand{\thefigure}{S\arabic{figure}}
\setcounter{equation}{0}
\setcounter{table}{0}
\setcounter{section}{0}
\setcounter{proposition}{0}
\setcounter{definition}{0}
\setcounter{figure}{0}

%

\section{The discrete Wigner function}\label{appendix:Wigner function}
Let $\cH_d$ denotes a Hilbert space of dimension $d$, $\{ \ket{j} \}_{j = 0, \cdots, d-1}$ denote the standard computational basis. Let $\cL(\cH_d)$ be the space of operators mapping $\cH_d$ to itself. For odd dimension $d$, the unitary boost and shift operators $X, Z \in \cL(\cH_d)$ are defined as~\cite{WWS19}: 
\begin{equation}
X\ket{j} = \ket{j \oplus 1},\quad Z\ket{j} = w^j \ket{j},
\end{equation}
where $w = e^{2\pi i/d}$ and $\oplus$ denotes addition modulo $d$. The \textit{discrete phase space} of a single $d$-level system is $\mathbb{Z}_d \times \mathbb{Z}_d$, which can be associated with a $d\times d$ cubic lattice. For a given point in the discrete phase space $\bmu=\left(u_1, u_2\right) \in \mathbb{Z}_d \times \mathbb{Z}_d$, the Heisenberg-Weyl operators are given by \begin{equation}
    T_\bmu=\tau^{-u_1 u_2} Z^{u_1} X^{u_2},
\end{equation}
where $\tau=e^{(d+1) \pi i / d}$. These operators form a group, the Heisenberg-Weyl group, and are the main ingredient for representing quantum systems in finite phase space. The case of non-prime dimension can be understood to be a tensor product of $T_\bmu$ with odd dimension. For each point $\bmu \in \mathbb{Z}_d \times \mathbb{Z}_d$ in the discrete phase space, there is a phase-space point operator $A_\bmu$ defined as 
\begin{equation}
A_0\coloneqq\frac{1}{d} \sum_\bmu T_\bmu,\quad A_\bmu \coloneqq T_\bmu A_0 T_\bmu^{\dagger}. 
\end{equation}
The discrete Wigner function of a state $\rho$ at the point  $\bmu$ is then defined as 
\begin{equation}
    \cW_\rho(\bmu)\coloneqq\frac{1}{d} \tr\left[A_\bmu \rho\right].
\end{equation}
More generally, we can replace $\rho$ with $H$ for the Wigner function of a Hermitian operator $X$.
There are several useful properties of the set $\{A_\bmu\}_\bmu$ as follows:
\begin{enumerate}
    \item $A_\bmu$ is Hermitian;
    \item $\sum_u A_\bmu/d = \mathbb{I}$;
    \item $\trace{A_\bmu A_{\bmu'}} = d \delta_{\bmu \bmu'}$;
    \item $\trace{A_\bmu} = 1$;
    \item $X = \sum_\bmu W_{X}(\bmu)A_\bmu$;
    \item $\{ A_\bmu\}_\bmu = \{ {A_\bmu}^T\}_\bmu$.  
\end{enumerate}

We call a Hermitian operator $X$ has positive discrete Wigner functions (PWFs) if $\cW_X(\bmu) \geq 0, \forall \bmu \in \mathbb{Z}_d \times \mathbb{Z}_d$. 

However, operations with PWFs including stabilizer operations cannot yield negative discrete Wigner functions, making negative quasi-probability a vital resource for quantum speedup in stabilizer computation~\cite{Veitch2012}. 

\section{Computation of Physical Implementability}~\label{appendix:sdp simplified}

From Definition~\ref{def:phy implement}, the physical implementability pf $\PWPq$ transformations from $\rho$ to $\sigma$ can be determined by the following SDP:
\begin{mini!}|s|
    { \cN_j }{ \sum_j \abs{c_j} }{}{}
    \addConstraint{ \cN = \sum_j \cN_j, \cJ_{\cN_j} \succeq 0 \,\,\forall\, j  \text{ (HP)}}
    \addConstraint{ \tr_2 J_{\cN_j} = c_j I \,\,\forall\, j, \sum_j c_j = 1 \text{ (TP)}}
    \addConstraint{ W_{\cN_j} \geq 0 \,\,\forall\, j  \text{ (PWP)}}
    \addConstraint{ \cN(\rho) = \sigma,  \text{ (Exact transformation)}}
\end{mini!}
\noindent Observe that for any $\cN_j, \cN_k$ that satisfy above constraints,
\begin{equation}
    \tr_2 J_{\cN_j + \cN_k} = \left(c_j + c_k\right) I, 
    \quad W_{\cN_j + \cN_k} \geq 0,
    \quad \cN_j + \cN_k \succeq 0
.\end{equation}
Then one can denote $\cN_1 = \sum_{j: c_j \geq 0} \cN_j$ and $\cN_2 = \sum_{j: c_j \geq 0} \cN_j$, and the optimization problem is equivalent to
\begin{mini!}|s|
    { \cJ_{\cN_{1}}, \cJ_{\cN_{2}} }{ 2c - 1 }{}{}
    \addConstraint{ \cJ_{\cN} = \cJ_{\cN_1} - \cJ_{\cN_2}, \cJ_{\cN_{j}} \succeq 0 \,\,\forall\, j  \text{ (HP)}}
    \addConstraint{ \tr_2 J_{\cN_1} = c I, \tr_2 J_{\cN_2} = (c - 1) I \text{ (TP)}}
    \addConstraint{ W_{\cN_1}, W_{\cN_2} \geq 0 \text{ (PWP)}}
    \addConstraint{ \tr_B[(\rho^T \otimes I)\cdot\cJ_\cN] = \sigma,  \text{ (Exact transformation)}}
\end{mini!}
Finally, when the transformation error is allowed, such that $\norm{\cN(\rho) - \sigma}_\infty < \eps$, the optimization problem can be reduced to
\begin{mini!}|s|
    { \cJ_{\cN_{1}}, \cJ_{\cN_{2}} }{ 2c - 1 }{}{}
    \addConstraint{ \cJ_{\cN} = \cJ_{\cN_1} - \cJ_{\cN_2}, \cJ_{\cN_{j}} \succeq 0 \,\,\forall\, j  \text{ (HP)}}
    \addConstraint{ \tr_2 \cJ_{\cN_1} = c I, \tr_2 \cJ_{\cN_2} = (c - 1) I \text{ (TP)}}
    \addConstraint{ W_{\cN_1}, W_{\cN_2}, W_{\cN} \geq 0 \text{ (PWP)}}
    \addConstraint{ -\eps I \preceq \tr_B[(\rho^T \otimes I)\cdot\cJ_\cN] - \sigma \preceq \eps I,  \text{ (transformation within error)}}
\end{mini!}

\end{document}